\long\def\invis#1{}
\newenvironment{proof}{\medskip
  \noindent{\scshape Proof:}}{\quad $\Box$\medskip}
\newtheorem{prop}{Proposition}
\renewcommand{\mid}{:~}
\newcommand{\secref}[1]{Section~\ref{sec:#1}}
\newcommand{\tabref}[1]{Table~\ref{tab:#1}}
\newcommand{\figref}[1]{Fig.~\ref{fig:#1}}
\newcommand{\lineref}[1]{line~\ref{line:#1}}
\newcommand{\propref}[1]{Proposition~\ref{prop:#1}}
\newcommand{\kCol}{$k$-\textsc{Col}}
\newcommand{\VC}{\textsc{VCol}}
\newcommand{\Dsatur}{\textsc{Dsatur}}
\newcommand{\Rlf}{\textsc{Rlf}}
\newcommand{\Danger}{\textsc{Danger}}
\newcommand{\Tabucol}{\textsc{Tabucol}}
\newcommand{\Partcol}{\textsc{Partialcol}}
\newcommand{\Head}{\textsc{Head}}
\newcommand{\Foo}{\textsc{Foo}}
\newcommand{\Dyn}{\textsc{Dyn}}
\newcommand{\MS}{\mathcal{S}}
\newcommand{\Scomp}[1]{\mathcal{S}_{#1}}
\newcommand{\rhocomp}{\rho_k}
\newcommand{\Spart}[1]{\hat{\mathcal{S}}_{#1}}
\newcommand{\rhopart}{\hat{\rho}_k}
\newcommand{\MR}{\mathcal{R}}
\newcommand{\Gr}{\textsc{Grd}}
\renewcommand{\Re}{\textsc{Rec}}
\newcommand{\Rnd}{\textsc{Rnd}}
\newcommand{\Dsa}{\textsc{Dsa}}
\newcommand{\algref}[1]{Algorithm~\ref{alg:#1}}
\title{Recycling Solutions for Vertex Coloring Heuristics\thanks{The final version of the paper will appear in The Journal of the Operations Research Society of Japan.}}
\author{Yasutaka Uchida \and Kaito Yajima\and Kazuya Haraguchi\thanks{E-mail: {\tt dr.kazuya.haraguchi@gmail.com}}}
\date{}
\begin{document}
\maketitle
\begin{abstract}
  The vertex coloring problem is a well-known NP-hard problem and has many applications in
  operations research and in scheduling.
  A conventional approach to the problem solves the $k$-colorability problem iteratively,
  decreasing $k$ one by one. 
  Whether a heuristic algorithm finds a legal $k$-coloring quickly or not
  is largely affected by an initial solution.
  We highlight 
  a simple initial solution generator,
  which we call 
  the recycle method,
  which makes use of the legal $(k+1)$-coloring that has been found.  
  An initial solution generated by the method is expected
  to guide a heuristic algorithm to find a legal $k$-coloring more quickly than conventional methods,
  as demonstrated by experimental studies.
  The results suggest that the recycle method should be used as the standard
  initial solution generator for both local search algorithms
  and modern hybrid methods. 
\end{abstract}
  \invis{
    The vertex coloring problem is a well-known NP-hard problem and has many applications in
    operations research and in scheduling.
    A conventional approach to the problem solves the $k$-colorability problem iteratively,
    decreasing $k$ one by one. 
    Whether a heuristic algorithm finds a legal $k$-coloring quickly or not
    is largely affected by an initial solution.
    We highlight 
    a simple initial solution generator,
    which we call 
    the recycle method,
    which makes use of the legal $(k+1)$-coloring that has been found.  
    An initial solution generated by the method is expected
    to guide a heuristic algorithm to find a legal $k$-coloring quickly,
    as demonstrated by experimental studies.
    For example, \Tabucol, a typical local search algorithm,
    achieves $k=240$ for the {\tt R1000.5} DIMACS graph
    when the recycle method is employed as the initial solution generator.
    It is hard for most of the state-of-the-art hybrid methods to achieve this number.
  }

  \begin{center}
  {\small
    \noindent
        {\bf Keywords:} Scheduling, vertex coloring problem, heuristics, initial solution, warm start}
  \end{center}

\section{Introduction}
\label{sec:intro}
For a positive integer $n$, let $[n]=\{1,\dots,n\}$.
Assume that a graph $G=(V,E)$ is given.
We abbreviate an edge $\{u,v\}\in E$ into $uv$ for simplicity.
If $uv\in E$, we say that $u$ and $v$ are {\em adjacent},
or equivalently, that $u$ is a {\em neighbor of $v$}. 
A {\em coloring\/} is an assignment of colors to the vertices in $V$,
and it is a {\em $k$-coloring\/} if it uses at most $k$ colors. 
Representing a color by an integer,
we denote a $k$-coloring by a mapping $c:V\rightarrow[k]$.
If $c(u)=c(v)$ holds for $uv\in E$,
then we say that $uv$ is a {\em conflicting edge},
and that the extreme points $u$ and $v$ are {\em conflicting vertices\/}. 
A coloring $c$ is {\em legal} if there is no conflicting edge, and otherwise, it is illegal. 
For $i\in[k]$, let $V_{c,i}$ be the set of vertices that are given a color $i$,
that is, $V_{c,i}=\{v\in V\mid c(v)=i\}$.
We call $V_{c,i}$ the {\em color class $i$\/}.
By $c$, the vertex set $V$ is partitioned into
$V=V_{c,1}\cup\dots\cup V_{c,k}$. 
Every color class is an {\em independent set\/}
(i.e., no two vertices in the set are adjacent) iff $c$ is legal.

The {\em $k$-colorability\/} (\kCol) {\em problem\/} asks whether there exists a legal $k$-coloring for $G$. 
If the answer is yes, then we say that $G$ is {\em $k$-colorable\/}. 
The {\em vertex coloring} (\VC) {\em problem\/}
asks for a legal $k$-coloring for the smallest $k$.
The smallest $k$ that admits a legal $k$-coloring
is called the {\em chromatic number} of $G$, which we denote by $\chi(G)$.

Due to their theoretical interest as well as possible applications,
the \kCol\ problem and the \VC\ problem
have attracted many researchers from various areas
such as discrete mathematics~\cite{AW.1976,AWK.1977},
optimization~\cite{KV.2012} and scheduling~\cite{L.1979,L.2016}. 
The \kCol\ problem is NP-complete~\cite{GJ.1979}
and thus the \VC\ problem is NP-hard. 
For these problems, many exact algorithms
and heuristics have been proposed so far.
There are several booklets and surveys
on algorithmic research 
on the \VC\ problem~\cite{GHHP.2013,L.2016,MT.2010}.

When it comes to heuristics, the \VC\ problem is
often tackled by $k$-fixed strategies. 
In a $k$-fixed strategy, 
starting from an appropriate integer $k$, we search for
a legal $k$-coloring, decreasing $k$ one by one,
until a termination condition is satisfied. 
%
To make $k$ small quickly,
it is demanded to solve the \kCol\ problem as fast as possible.
Warm start should be effective for this purpose in practice~\cite{KN.2013},
that is, a ``good'' initial solution should be fed 
to the algorithm.
The quality of an initial solution is evaluated by
a penalty function that estimates how far a $k$-coloring
is from the legality (e.g., the number of conflicting edges).

In the present paper, we highlight 
a simple but effective method,
which we call the {\em recycle method\/}, 
for generating an initial solution for the \kCol\ problem.
The recycle method generates an initial solution
by modifying the legal $(k+1)$-coloring that has been found. 
More precisely, it chooses some of the $k+1$ colors
and then recolors all vertices that have the chosen colors
so that one of the $k+1$ colors disappears from the 
graph. 
Consequently, we have a $k$-coloring, 
although it must be illegal in general. 
We use it as an initial solution of the algorithm for the \kCol\ problem.

The contribution of the paper is not to propose the recycle method
but to show its empirical effectiveness.
There are various initial solution generators  
for the \kCol\ problem,
and the recycle method was already used in previous studies (e.g., \cite{CDS.2018,HK.2019}). 
However, there is no study that makes an intensive comparison of them. 


We present the background in \secref{bg},
and then explain the motivation and the detail of the recycle method
in \secref{recycle}.
Interestingly, we can derive an upper bound on the penalty value
of a $k$-coloring that is generated by the recycle method.
In \secref{comp}, we demonstrate the effectiveness of the recycle method by computational studies.
We show how the performance of three heuristic algorithms is improved when the recycle method is employed as the initial solution generator, instead of conventional
greedy/random methods, \Dsatur~\cite{B.1979} and \Rlf~\cite{L.1979}. 
The three heuristic algorithms that we use in our experiments are two well-known local search algorithms,
\Tabucol~\cite{HW.1987} and \Partcol~\cite{BZ.2008}, and
a modern hybrid method, \Head~\cite{MG.2018}. 
Specifically, with an initial solution generated by the recycle method, the heuristic algorithms
tend to find 
a legal $k$-coloring more quickly especially when $k$ is not sufficiently small. On the other hand, when $k$ is small to some extent, the difference between initial solution generators becomes comparatively small. Hence, when we do not have much computation time
and do not know a good upper bound on the chromatic number, it is a nice choice to use the recycle method as the initial solution generator. 
The benchmark instances are taken from
DIMACS graphs\footnote{\url{https://mat.gsia.cmu.edu/COLOR/instances.html}}
and Carter et al.'s timetabling instances~\cite{CLL.1996}. 
Finally we conclude the paper in \secref{conc}.

\section{Background}
\label{sec:bg}
As the \VC\ problem is NP-hard, various heuristic algorithms have been proposed so far; e.g.,
tabu search~\cite{BZ.2008,HW.1987,GPR.1996},
simulated annealing~\cite{M.1996},
variable neighborhood search~\cite{AHZ.2003},
variable space search~\cite{HPZ.2007}.
Hybrid methods include population-based 
methods~\cite{DT.2008,GH.1999,LH.2010,MMT.2008,MG.2018},
independent set extraction~\cite{HW.2012,WH.2012}, and 
quantum-inspired methods~\cite{TC.2011-2,TC.2011}.
More recently, data reduction techniques
for massive (and possibly sparse) graphs
have been studied intensively~\cite{HK.2019,LCLS.2017,VBB.2015}.
They aim at extracting a ``hard'' part of the graph by peeling out ``easy'' parts.
The hard part still needs to be solved somehow,
and thus it is a significant research issue to develop
effective heuristic algorithms
that find approximate solutions from instances with hundreds/thousands of vertices.

The search strategy of a heuristic algorithm
is determined by the definitions of the search space and
the penalty function to be minimized. 
There are two families of search strategies in the literature,
$k$-fixed ones and $k$-nonfixed ones. 
In this study, we focus on the former, which is more popular. 
Refer to \cite{GH.2006} for a detailed explanation of the two families of strategies. 

In general, a $k$-fixed strategy
searches for an approximate solution
by solving the \kCol\ problem iteratively, 
as described in \algref{k-fixed}. 

\begin{algorithm}[t!]
  \caption{An iterative scheme of $k$-fixed strategic heuristic algorithms for the \VC\ problem}
  \label{alg:k-fixed}
  \DontPrintSemicolon
  \SetKwInOut{Input}{Input}\SetKwInOut{Output}{Output}
  \Input{a graph $G=(V,E)$}
  \Output{a legal $k$-coloring for some integer $k$}
  
  Construct a legal $k$-coloring $c$ for an arbitrary $k$ somehow;
  
  \While{{\rm a termination condition is not satisfied}}{

    $k\gets k-1$;
    
    Search for a legal $k$-coloring
    (i.e., solve the \kCol\ problem)
    \label{line:kcol}
  }
  
  Output the legal $k$-coloring for the fewest $k$
  among those found in the above
\end{algorithm}

Concerning the \kCol\ problem in \lineref{kcol},
let us denote by $\MS$ the search space and  
by $\rho:\MS\rightarrow\mathbb{R}_+\cup\{0\}$ the penalty function such that $\rho(c)=0$ holds iff $c$ is legal. 
Then the \kCol\ problem is reduced to the problem
of minimizing the penalty function within $\MS$.
To reach to a legal $k$-coloring as soon as possible, if one exists,
warm start should be effective~\cite{KN.2013};
that is, to start the search from 
an initial solution
with a small penalty value. 

%
To construct the ``first'' initial solution in line 1,
we may apply various constructive algorithms in the literature
such as a random method, a greedy method (called SEQ in \cite{M.2004}),
\Dsatur~\cite{B.1979}, \Rlf~\cite{L.1979},
\Danger~\cite{GPR.1996}, and GRASP~\cite{LM.2001}.
On the other hand, less effort has been made
for generation of an initial solution in \lineref{kcol}.
A random method or a greedy method has mostly been applied for this task. 
The above-mentioned constructive algorithms can be applied,
and what we call the recycle method has already been used in \cite{CDS.2018,HK.2019}. Still, no intensive comparison has been made so far. 
%
In fact, Lewis et al.~\cite{LTMG.2012} stated that 
``the method of initial solution generation is not critical''
in \Tabucol's performance,
where \Tabucol~\cite{HW.1987} is a well-known tabu search algorithm
in the literature.
In the rest of the paper, the term ``initial solution'' refers to
one that is fed 
to a heuristic algorithm
for the \kCol\ problem in \lineref{kcol}
unless no confusion arises.

Before closing this section,
let us mention two concrete $k$-fixed strategies
that are used in the literature. 

\begin{description}
\item[$k$-fixed-penalty:]
In this strategy, the solution space is
the set of all possible $k$-colorings.
We denote the set of all $k$-colorings by $\Scomp{k}$.
A $k$-coloring $c\in\Scomp{k}$ is evaluated by
the number of conflicting edges.
Denoted by $\rhocomp:\Scomp{k}\rightarrow\mathbb{R}_+\cup\{0\}$,
it is defined to be $\rhocomp(c)\triangleq |\{uv\in E\mid c(u)=c(v)\}|$. 

\Tabucol~\cite{HW.1987} is a well-known local search algorithm
that employs the search strategy.
It explores the search space $\Scomp{k}$ by tabu search,
using $\rhocomp$ as the penalty function. 
Although it was born more than 30 years ago,
the algorithm and its extension are still
used as subroutines
in modern metaheuristics~\cite{GH.1999,GH.2006,HW.2012,LH.2010,MG.2018,WH.2012}.

\item[$k$-fixed-partial:]
The search space
is the set of what we call partial $k$-colorings.
Let us denote by $\phi$ a dummy color.
We define a {\em partial $k$-coloring} $c$
to be a function $c:V\rightarrow[k]\cup\{\phi\}$
such that, for every edge $uv\in E$,
$c(u)\ne c(v)$ holds whenever $c(u),c(v)\in[k]$.
In other words, a partial $k$-coloring admits uncolored vertices,
which are represented by $\phi$, 
but does not admit conflicting edges.
We denote the set of all partial $k$-colorings by $\Spart{k}$.
A partial $k$-coloring is evaluated by
how many vertices are assigned the dummy color. 
Denoted by $\rhopart:\Spart{k}\rightarrow\mathbb{R}_+\cup\{0\}$,
the penalty function is defined to be $\rhopart(c)\triangleq|\{v\in V\mid c(v)=\phi\}|$.

The search strategy was first introduced by Morgenstern~\cite{M.1996}. 
Bl\"ochliger and Zufferey~\cite{BZ.2008} proposed a tabu search algorithm
named \Partcol\ based on the search strategy.
\end{description}

\section{Recycle Method}
\label{sec:recycle}

In this section, we present the recycle method,
the spotlighted 
initial solution generator. 
Our aim is to develop a good initial solution generator
for {\em any} heuristic algorithms under $k$-fixed strategies.

By the recycle method,
we mean any method that constructs a $k$-coloring
from a given legal $(k+1)$-coloring, say $c$, 
as an initial solution for the \kCol\ problem. 
In $k$-fixed strategies,
a legal $(k+1)$-coloring is always available
since the $(k+1)$-\textsc{Col} problem has been already solved. 
Some of previous methods construct initial solutions from scratch (e.g., the random/greedy method),
whereas the recycle method constructs initial solutions
by making use of the legal $(k+1)$-coloring.

We describe how to generate a $k$-coloring from
a legal $(k+1)$-coloring, say $c$. 
Recall the two $k$-fixed search strategies
that we mentioned in \secref{bg}. 

\begin{description}
\item[$k$-fixed-penalty:]
  We determine a nonempty subset $K\subseteq[k+1]$ and
  an element $\varepsilon\in K$. 
  For every $i\in K$ and $v\in V_{c,i}$,  
  we change the color of $v$
  to a color in $[k+1]\setminus\{\varepsilon\}$.
  Because the color $\varepsilon$ disappears
  and there remain at most $k$ colors, 
  we have a $k$-coloring by
  degenerating the region to $[k]$. 

\item[$k$-fixed-partial:]
  We determine a nonempty subset $K\subseteq[k+1]$.
  For every $i\in K$ and $v\in V_{c,i}$,
  we assign $\phi$ to $v$.
  Because there remain at most $k$ colors
  and a dummy color $\phi$
  and no conflicting edge exists,
  we have a partial $k$-coloring
  by degenerating the region to $[k]\cup\{\phi\}$. 
\end{description}

We have freedom of designing detailed configurations
of the recycle method. 
In both strategies, the subset $K$ may be chosen randomly,
or a set of colors whose color classes are
the 
smallest.
In the $k$-fixed-penalty,
the color $\varepsilon$ to be removed is chosen at random
or can be a color whose class is
the 
smallest among $K$. 
We can change the color of a vertex in $V_{c,i}$ $(i\in K)$
into one in $[k+1]\setminus\{\varepsilon\}$ arbitrarily.

Let us describe the
motivations 
of the recycle method. 
For the \kCol\ problem, it is known that 
easy-hard-easy phase transition exists
with respect to $k$~\cite{HH.2002,M.2018}. 
The peak of difficulty is said to lie around
the chromatic number $k=\chi(G)$. 
In the $k$-fixed strategies,
as $k$ gets smaller, the \kCol\ problem must be harder. 
A simple method like the random/greedy method
must yield a poor initial solution for such $k$. 

We claim that, when $k$ is small to some extent, 
a legal $(k+1)$-coloring should be precious
in the sense that it cannot be obtained easily. 
We expect that a good $k$-coloring (in terms of the penalty function)
could be obtained by
a 
slight modification of the legal $(k+1)$-coloring. 
The expectation is supported by \tabref{extracol}.
The table shows the distribution of color class sizes
in a legal 146-coloring for the {\tt C2000.5} instance
that is found by Wu and Hao~\cite{WH.2012}.
Most of the color classes are large.
In particular, more than $1/3$ of the 146 color classes are largest independent sets
that consist of 16 vertices. 
We also see that only few color classes are small.
Even though we recolor some small color classes arbitrarily
so that there remain at most $k$ colors,
we could obtain a $k$-coloring
that has few conflicting edges.

\begin{table}[t!]
  \centering
  \caption{The distribution of color class sizes in a legal 146-coloring for the {\tt C2000.5} instance~\cite{WH.2012}; the current best known bound is 145~\cite{HW.2012}}
  \label{tab:extracol}
  \begin{tabular}{lrrrrrrrrrlr}
    \hline
    & \multicolumn{9}{c}{Size $b$} && Total \\
    & 8 & 9 & 10 & 11 & 12 & 13 & 14 & 15 & 16\\
    \hline
    Number of color\\
    classes with size $b$ & 2 & 4 & 15 & 10 & 16 & 14 & 14 & 18 & 53 && 146\\
    \hline
  \end{tabular}
\end{table}

The last motivation is theoretical.
Various configurations of the recycle method are possible,
according to how to take $K$ (and $\varepsilon$ for the $k$-fixed-penalty) and how to change colors of the vertices
in $\bigcup_{i\in K} V_{c,i}$. 
For some configurations, 
we can derive an upper bound on the penalty value
of a generated solution,
with respect to the graph size and $k$.

We denote a given legal $(k+1)$-coloring by $c$.
Without loss of generality, we assume that
$V_{c,k+1}$ is
the 
smallest color class,
that is,
$|V_{c,k+1}|\le |V_{c,i}|$ holds for any $i\in[k+1]$. 
We consider a configuration of the recycle method such that
the smallest color class $V_{c,k+1}$ is recolored,
that is, $K=\{k+1\}$. 

\paragraph{$k$-fixed-penalty.}
By $K=\{k+1\}$,
the color $\varepsilon\in K$ to be removed is $k+1$. 
Let us define $\MR(c)$ to be the set of all $k$-colorings
that can be obtained by changing the colors of vertices in $V_{c,k+1}$ to ones in $[k]$,
that is,
\begin{align*}
  \MR(c)\triangleq
  \{c'\in\Scomp{k}\mid \forall i\in[k],\ V_{c',i}\supseteq V_{c,i},\ V_{c',i}\setminus V_{c,i}\subseteq V_{c,k+1}\}.
\end{align*}
For any $c'\in\MR(c)$,
we can derive an upper bound on the penalty value $\rhocomp(c')$, as stated by the following proposition.

\begin{prop}
  \label{prop:penalty_ub}
  For a graph $G=(V,E)$ with
  the 
  maximum degree $\Delta$,
  let $c$ be a legal $(k+1)$-coloring
  such that $|V_{c,k+1}|\le |V_{c,i}|$ holds for any $i\in[k+1]$.
  Any $c'\in\MR(c)$ satisfies
  $\rhocomp(c')\le\displaystyle\frac{n\Delta}{k+1}$,
  where $n=|V|$. 
\end{prop}
\begin{proof}
  Because $\{V_{c,1},\dots,V_{c,k+1}\}$ is a partition of $V$
  and $V_{c,k+1}$ is
  the 
  smallest set among them,
  we have $|V_{c,k+1}|\le\frac{n}{k+1}$.
  For any $i\in[k+1]$,
  no edge should exist between any two vertices in $V_{c,i}$ 
  because $V_{c,i}$ is independent.
  For any $i,j\in[k]$ with $i\ne j$,
  every edge between a vertex in $V_{c,i}$ and a vertex in $V_{c,j}$
  is not conflicting in $c'$. 
  Then every conflicting edge in $c'$
  is incident to a vertex in $V_{c,k+1}$.
  Because the degree of a vertex is at most $\Delta$,
  the number of edges incident to vertices in $V_{c,k+1}$
  is at most $\frac{n\Delta}{k+1}$.
\end{proof}


\propref{penalty_ub} tells that,
even though we assign arbitrary colors in $[k]$ to $V_{c,k+1}$,
we have a $k$-coloring whose penalty value is at most $\frac{n\Delta}{k+1}$.

Suppose that we generate a $k$-coloring $c'$ by
assigning a color $i\in[k]$ to each $v\in V_{c,k+1}$,
where the color $i$ is chosen at
random.  
The expected value of $\rhocomp(c')$ is bounded as follows. 

\begin{prop}
  \label{prop:penalty_exp}
  For a graph $G=(V,E)$ with
  the 
  maximum degree $\Delta$,
  suppose that we are given a legal $(k+1)$-coloring $c$
  such that $|V_{c,k+1}|\le |V_{c,i}|$ holds for any $i\in[k+1]$.
  Let $c'\in\MR(c)$ denote a $k$-coloring that is obtained by
  assigning a color $i\in[k]$ to each $v\in V_{c,k+1}$,
  where the color $i$ is chosen from $[k]$ at
  random. 
  The expected value of $\rhocomp(c')$ is at most $\displaystyle\frac{n\Delta}{k(k+1)}$,
  where $n=|V|$. 
\end{prop}
\begin{proof}
  For a vertex $v\in V_{c,k+1}$ and a color $i\in[k]$,
  let $f_i$ denote the number of conflicting edges incident to $v$
  that appear when $v$ is assigned the color $i$.
  The expected number of conflicting edges incident to $v$
  is $f_1/k+\dots+f_k/k\le\Delta/k$.
  By $|V_{c,k+1}|\le n/(k+1)$,
  the expected number of all conflicting edges is at most $\frac{n\Delta}{k(k+1)}$.
\end{proof}

Let us compare the penalty values of
a $k$-coloring $c'$ that is obtained in the manner of \propref{penalty_exp}
and a completely random $k$-coloring $r$.
By a completely random $k$-coloring, we mean that
every vertex is assigned a color $i\in[k]$
with probability $\frac{1}{k}$.
Let $m=|E|$. 
Because each edge
is conflicting with probability $\frac{1}{k}$,
the expectation of $\rhocomp(r)$ is
$\frac{m}{k}$ from its linearity. 
On the other hand, the expectation of $\rhocomp(c')$
is {\em at most\/} $\frac{n\Delta}{k(k+1)}$. 
When $k$ is large to some extent,
it is highly likely that $\rhocomp(c')<\rhocomp(r)$ holds. 

The following proposition provides derandomization of \propref{penalty_exp}. 

\begin{prop}
  For a graph $G=(V,E)$ with
  the 
  maximum degree $\Delta$,
  suppose that we are given a legal $(k+1)$-coloring $c$
  such that $|V_{c,k+1}|\le |V_{c,i}|$ holds for any $i\in[k+1]$.
  We can generate a $k$-coloring $c'\in\MR(c)$ such that $\rhocomp(c')\le\displaystyle\frac{n\Delta}{k(k+1)}$
  in $O(n\Delta)$ time, where $n=|V|$. 
\end{prop}
\begin{proof}
  Because $c$ is a legal $(k+1)$-coloring,
  any neighbor of a vertex $v\in V_{c,k+1}$
  is assigned a color in $[k]$.
  There is a color $i$ in $[k]$ that appears at most $\frac{\Delta}{k}$
  times among the $v$'s neighborhood.
  To recolor $v$,
  we let $c'(v)\gets i$, which counts at most $\frac{\Delta}{k}$ conflicting edges. 
  Since $|V_{c,k+1}|\le\frac{n}{k+1}$, 
  we have $\rhocomp(c')\le\frac{n\Delta}{k(k+1)}$.
  The time complexity is obvious. 
\end{proof}


\paragraph{$k$-fixed-partial.} 
By $K=\{k+1\}$,
we generate a partial $k$-coloring $c'$
from a legal $(k+1)$-coloring $c$
by assigning the dummy color $\phi$ to
the 
smallest color class $V_{c,k+1}$,
whereas the other color classes remain the same. 
We can derive an upper bound on the penalty value
of the generated $k$-coloring.

\begin{prop}
  \label{prop:partial_ub}
  For a graph $G=(V,E)$ with
  the 
  maximum degree $\Delta$,
  let $c$ be a legal $(k+1)$-coloring
  such that  $|V_{c,k+1}|\le |V_{c,i}|$ holds for any $i\in[k+1]$
  and $c'$ be a partial $k$-coloring
  such that $c'(v)=c(v)$ for all $v\in V\setminus V_{c,k+1}$ and
  $c'(v)=\phi$ for all $v\in V_{c,k+1}$. 
  It holds that $\rhopart(c')\le\displaystyle\frac{n}{k+1}$,
  where $n=|V|$. 
\end{prop}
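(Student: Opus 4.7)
The proof will be essentially an observation plus a pigeonhole count, considerably simpler than the two preceding propositions because the penalty function here counts uncolored vertices rather than conflicting edges, so no analysis of edge structure is needed.

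First I would verify that $c'$ as defined is in fact a valid member of $\Spart{k}$. Since $c$ is legal, for any edge $uv$ we have $c(u)\ne c(v)$; for $u,v\in V\setminus V_{c,k+1}$ we have $c'(u)=c(u)\ne c(v)=c'(v)$ with both values in $[k]$, and otherwise at least one endpoint is mapped to $\phi$. Hence the defining condition of a partial $k$-coloring (no edge has both endpoints colored by the same element of $[k]$) is satisfied, and $c'$ uses only colors in $[k]\cup\{\phi\}$ since $V_{c,k+1}$ has been emptied.

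Next I would compute $\rhopart(c')$ directly from its definition. By construction, $c'(v)=\phi$ if and only if $v\in V_{c,k+1}$, so
\begin{equation*}
\rhopart(c')=|\{v\in V\mid c'(v)=\phi\}|=|V_{c,k+1}|.
\end{equation*}

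Finally I would invoke the pigeonhole bound on a smallest block of a partition. Since $\{V_{c,1},\dots,V_{c,k+1}\}$ partitions $V$ into $k+1$ classes and $V_{c,k+1}$ is a smallest one, $|V_{c,k+1}|\le n/(k+1)$. Chaining this with the equality above yields $\rhopart(c')\le n/(k+1)$. There is no real obstacle in this proof; the only thing to be careful about is stating the valid-partial-coloring check explicitly so that applying $\rhopart$ is justified.
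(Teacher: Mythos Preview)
Your proof is correct and follows the same approach as the paper, which simply notes that the bound follows from $|V_{c,k+1}|\le \frac{n}{k+1}$. You have added the explicit check that $c'$ is a valid partial $k$-coloring, which the paper leaves implicit, but the core argument is identical.
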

\begin{proof}
  The bound is due to $|V_{c,k+1}|\le\frac{n}{k+1}$. 
\end{proof}

Let us give a remark on the penalty value of a random partial $k$-coloring, say $r'$,
which is constructed as follows;
we first assign a random color from $[k]$ to every vertex in $V$.
Then, while there is a conflicting edge,
we repeat removing the color of a conflicting vertex
(i.e., $\phi$ is assigned to the vertex). 
Because the expected number of conflicting edges is $\frac{m}{k}$,
the expected penalty value $\rhopart(r')$ is at most $\frac{m}{k}$.
We regard that the partial $k$-coloring $c'$ of \propref{partial_ub} has a better upper bound
since $\rhopart(c')\le\frac{n}{k+1}$ {\em always} holds
and $\frac{n}{k+1}\le\frac{m}{k}$
if $n\le m$, which holds when $G$ is connected and not a tree. 

\section{Computational Studies}
\label{sec:comp}

In this section, we present computational results
to show how the recycle method is effective,
in comparison with conventional initial solution generators.
We observe how the performance of three $k$-fixed strategic heuristics changes
with respect to initial solution generators. 
The heuristics include two local search algorithms,
\Tabucol~\cite{HW.1987} and \Partcol~\cite{BZ.2008},
and a modern hybrid method, \Head~\cite{MG.2018}. 

We mention the experimental setup in \secref{comp.setup}.
In Sections~\ref{sec:comp.local} and \ref{sec:comp.head},
we present the results on the local search algorithms
and the hybrid method, respectively.

Our purpose in the experiment is not to update the best-known upper bounds of $\chi(G)$ for benchmark instances
but is to show how the recycle method is effective in comparison with conventional initial solution generators. 
The best known bounds could be updated if we run heuristic algorithms for tens/hundreds of hours, as in \cite{BZ.2008,MG.2018},
with carefully determined parameter values. 
However, it is not our interest here.
We would like to show that, with an initial solution generated by the recycle method, a $k$-fixed strategic
heuristic algorithm finds a legal $k$-coloring more quickly especially when $k$ is not sufficiently small.
This indicates that the recycle method enables us to share more time
to work on the \kCol\ problem for smaller $k$.

\subsection{Experimental settings}
\label{sec:comp.setup}

All the experiments are conducted on a workstation that carries
an Intel Core i7-4770 Processor (up to 3.90GHz by means of Turbo Boost Technology)
and 8GB main memory. The installed OS is Ubuntu 16.04.

For benchmark instances, we take up two types of instances:
DIMACS instances
and Carter et al.'s timetabling instances~\cite{CLL.1996}.
We let each heuristic algorithm
solve an instance 50 times with different random seeds. 

Suppose that we have found a legal $(k+1)$-coloring $c$ for some integer $k$.
The next step of the
$k$-fixed strategy 
is to solve the \kCol\ problem.
We would like to generate an initial solution for heuristic algorithms.  
We summarize the
five 
initial solution generators as follows.

\begin{itemize}
\item Recycle method ({\sc \Re}): 
    We use the recycle method in the configuration of \propref{penalty_exp}.
    Let $i^\ast\in[k+1]$ denote a color such that  
    $V_{c,i^\ast}$ is
    the 
    smallest color class among $V_{c,1},\dots,V_{c,k+1}$. 
    We set $K=\{i^\ast\}$. 
    In the $k$-fixed-penalty strategy,
    $\varepsilon$ is automatically set to $i^\ast$.
    Vertices in $V_{c,i^\ast}$ are recolored to
    ones in $[k+1]\setminus\{\varepsilon\}$ randomly ($k$-fixed-penalty)
    or to $\phi$ ($k$-fixed-partial). 

  \item Greedy method ({\sc \Gr}): 
    We visit the vertices in a random order.
    For a visited vertex $v$,
    if there is a color $i\in[k]$ such that
    assigning $i$ to $v$ does not produce a conflicting edge,
    then we assign the smallest $i$ to $v$.
    Otherwise, we assign a random color in $[k]$ to $v$ ($k$-fixed-penalty)
    or leave $v$ uncolored ($k$-fixed-partial). 
      
  \item Random method ({\sc \Rnd}): 
    In the $k$-fixed-penalty strategy,
    we assign a random color in $[k]$ to each vertex $v\in V$.
    In the $k$-fixed-partial strategy,
    we visit vertices in a random order.
    Then for each $v\in V$, 
    let us denote by $K_v$ the set of colors that
    appear in the neighborhood of $v$.
    If $K_v\subsetneq[k]$, then
    we assign a random color in $[k]\setminus K_v$ to $v$.
    Otherwise (i.e., no color is available),
    we assign the dummy color $\phi$ to $v$.

  \item\Dsatur~\cite{B.1979} ({\sc \Dsa}):
    It is a constructive algorithm
    that repeats picking up
    a 
    vertex with the largest
    chromatic degree
    (i.e., the number of colors that appear in the adjacent vertices),
    and then assigning the smallest color to the vertex
    so that no conflicting edge arises,
    until all vertices are assigned colors.
    Suppose that we have obtained a $k'$-coloring $(k'>k)$
    by \Dsa. To make it a $k$-coloring, 
    we turn the color of any vertex
    that is assigned a color in $\{k+1,\dots,k'\}$ into
    one in $[k]$ randomly ($k$-fixed-penalty)
    or into the dummy color $\phi$ ($k$-fixed-partial).
    
  \item\Rlf~\cite{L.1979}:
    It repeats extracting a maximal independent set
    as a color class until all vertices are chosen.
    Suppose that $k$ independent sets
    have been extracted as color classes by \Rlf.
    To obtain a $k$-coloring based on them,
    we assign random colors in $[k]$ ($k$-fixed-penalty) or
    the dummy color $\phi$ ($k$-fixed-partial)
    to the remaining vertices.     
\end{itemize}

\subsection{Recycle method on local search algorithms}
\label{sec:comp.local}

\paragraph{Overview of the algorithms.}
\Tabucol~\cite{HW.1987} and \Partcol~\cite{BZ.2008} are local search algorithms
that exploit the $k$-fixed-penalty and
$k$-fixed-partial 
strategies, respectively. 

We downloaded the source codes of \Tabucol\ and \Partcol\ (written in C)
from R.~Lewis's website.\footnote{\url{http://rhydlewis.eu/resources/gCol.zip}}
In the program, $\Gr$ is implemented as the initial solution generator for the \kCol\ problem
(line 4 in \algref{k-fixed}). 
We appended implementation of the other four generators in the source codes,
utilizing functions and data structures in the original source code as possible. 

For the iterative scheme (\algref{k-fixed}),
we construct the first initial solution in line 1
by \Dsatur~\cite{B.1979}. 
%
For the termination condition in the while-loop, 
we set the upper limit of computation time to 600 seconds.
In our preliminary studies, we observed that
computation time of the initial solution generators
is negligible in comparison with the tabu search. 
We also observed that 600 seconds are enough to
draw our conclusion that the recycle method is effective
especially when $k$ is not sufficiently small. 
Some of other papers in the literature
use a much longer time limit (e.g., ten hours in \cite{BZ.2008}).
Their experimental purpose is to update the best-known upper bound of a chromatic number,
which is different from our experiments. 

Both \Tabucol\ and \Partcol\ are tabu search algorithms.
The tabu tenure is determined by either the dynamic scheme (\Dyn)
or the reactive scheme (\Foo). 
\begin{itemize}
\item\Dyn: The tabu tenure is set to $\alpha n_c+\gamma$,
where $\alpha=0.6$, $n_c$ is the number of conflicting vertices,
and $\gamma$ is an integer
that is picked up from $\{0,\dots,9\}$ at 
random.
This setting of the tabu tenure
is recommended in the literature~\cite{BZ.2008,GH.1999,LTMG.2012}. 
\item\Foo: Proposed in \cite{BZ.2008}, the scheme provides a reactive tabu tenure based on the
  fluctuation of the objective (\Foo) function.
  Roughly, if the penalty value does not change during a long period,
  the tenure is set to a large value in order to escape from search stagnation.
  It is then decreased slowly along the search process. 
  We use random parameter values, following the R.~Lewis's original implementation. 
\end{itemize}
We represent \Tabucol\ whose tabu tenure is set by the \Dyn\ scheme
as \Dyn-\Tabucol.
It is analogous with \Foo-\Tabucol, \Dyn-\Partcol, and \Foo-\Partcol.
We will present the results of only \Dyn-\Tabucol\ and
\Foo-\Partcol\ in the paper.
The other two algorithms achieve
the similar results.


\paragraph{DIMACS instances.}
We solve 20 DIMACS instances that are treated in \cite{BZ.2008}.
We present the results of \Dyn-\Tabucol\ and \Foo-\Partcol\
in Tables \ref{tab:summary} and \ref{tab:summary_partcol}, respectively. 
For each instance,
the number after the first alphabets represents the number of vertices.
For example,
the 
{\tt DSJC1000.5} instance consists of 1000 vertices
and is a random graph of the Erd\"os-R\'enyi model with edge density 0.5. 
See \cite{BZ.2008} for
a 
description of the instances.

\begin{table}
  \centering
  \caption{Results of \Dyn-\Tabucol\ for the selected 20 DIMACS instances}
  \label{tab:summary}
  {\small
  \begin{tabular}{lr|rrrrr|rr}
    \hline
    Instance &
    \multicolumn{1}{c|}{First} &
    \multicolumn{1}{c}{\Re} &
    \multicolumn{1}{c}{\Gr} &
    \multicolumn{1}{c}{\Rnd} &
    \multicolumn{1}{c}{\Dsa} &
    \multicolumn{1}{c|}{\Rlf} &
    \multicolumn{1}{c}{Best} &
    \multicolumn{1}{c}{$\chi$} \\
    &&&&&&& \multicolumn{2}{c}{\cite{GHHP.2013,MT.2010,MG.2018}}\\
    \hline
    {\tt DSJC1000.1}& 25 (1)  & 20 (6)& 20 (4)& 20 (6)& \bf 20 (11)& 20 (5)& 20 & $\ge10$\\
    {\tt DSJC1000.5}& 113 (1)  & 88 (3)& 88 (2)& 88 (3)& 89 (37)& 88 (1)& 82 & $\ge73$\\
    {\tt DSJC1000.9}& 297 (1)  & \bf 225 (6)& 226 (32)& 226 (31)& 226 (19)& 225 (1)& 222 & $\ge216$\\
    {\tt DSJC500.1}& 15 (11)  & 12 (50)& 12 (50)& 12 (50)& 12 (50)& 12 (50)& 12 & $\ge5$\\
    {\tt DSJC500.5}& 63 (3)  & \bf 49 (13)& 49 (8)& 49 (3)& 49 (7)& 49 (9)& 47 & $\ge43$\\
    {\tt DSJC500.9}& 161 (3)  & 126 (2)& 126 (1)& \bf 126 (3)& 127 (50)& 127 (50)& 126 & $\ge123$\\
    {\tt DSJR500.1c}& 87 (3)  & 86 (11)& 87 (3)& 87 (3)& 86 (7)& \bf 85 (1)& 85 & 84\\
    {\tt DSJR500.5}& 129 (9)  & \bf 124 (4)& 127 (1)& 128 (3)& 125 (7)& 126 (2)& 122 & 122\\
    {\tt R1000.1c}& 103 (3)  & \bf 98 (33)& 98 (11)& 98 (12)& 98 (23)& 98 (24)& 98 & 98 \\
    {\tt R1000.5}& 250 (50)  & \bf 240 (4)& 247 (1)& 249 (9)& 240 (2)& 243 (2)& 234 & 234\\
    {\tt R250.1c}& 65 (34)  & 64 (26)& 65 (34)& 65 (34)& \bf 64 (28)& 64 (12)& 64 & 64\\
    {\tt R250.5}& 66 (8)  & 66 (43)& 66 (11)& 66 (8)& \bf 66 (46)& 66 (8)&65&65 \\
    {\tt flat1000\_50\_0}& 111 (1)  & \bf 50 (50)& 56 (2)& 56 (1)& 58 (1)& 58 (1)&50&50\\
    {\tt flat1000\_60\_0}& 112 (4)  & \bf 60 (50)& 73 (2)& 74 (5)& 75 (3)& 75 (1)&60&60\\
    {\tt flat1000\_76\_0}& 112 (1)  & 87 (1)& \bf 87 (2)& 87 (1)& 87 (1)& 88 (47)&81&76\\
    {\tt flat300\_28\_0}& 40 (1)  &\bf 28 (4)& 30 (1)& 30 (1)& 31 (49)& 31 (50)&28&28\\
    {\tt le450\_15c}& 23 (12)  & 15 (2)&\bf 15 (3)& 15 (1)& 15 (2)& 15 (1)&15&15\\
    {\tt le450\_15d}& 23 (2)  & 16 (50)& 16 (49)& 16 (50)& 15 (1)& 16 (50)&15&15\\
    {\tt le450\_25c}& 28 (4)  & 26 (50)& 26 (50)& 26 (50)& 26 (49)& 26 (50)&25&25\\
    {\tt le450\_25d}& 28 (19)  & 26 (50)& 26 (50)& 26 (50)& 26 (50)& 26 (50)&25&25\\
    \hline
    \multicolumn{2}{c|}{Averaged rank} & 1.40 & 2.85 & 2.90 & 2.80 & 2.95 & - & - \\
    \hline
  \end{tabular}
  }
\end{table}

\begin{table}
  \centering
  \caption{Results of \Foo-\Partcol\ for the selected 20 DIMACS instances}
  \label{tab:summary_partcol}
  {\small
    \begin{tabular}{lr|rrrrr|rr}
      \hline
      Instance &
      \multicolumn{1}{c|}{First} &
      \multicolumn{1}{c}{\Re} &
      \multicolumn{1}{c}{\Gr} &
      \multicolumn{1}{c}{\Rnd} &
      \multicolumn{1}{c}{\Dsa} &
      \multicolumn{1}{c|}{\Rlf} &
      \multicolumn{1}{c}{Best} &
      \multicolumn{1}{c}{$\chi$} \\
      &&&&&&& \multicolumn{2}{c}{\cite{GHHP.2013,MT.2010,MG.2018}}\\
      \hline
          {\tt DSJC1000.1}& 25 (1)  & \bf 20 (1)& 21 (50)& 21 (50)& 21 (50)& 21 (50)& 20 & $\ge10$\\
          {\tt DSJC1000.5}& 113 (1)  & \bf 89 (20)& 89 (13)& 89 (18)& 89 (12)& 89 (14)& 82 & $\ge73$\\
          {\tt DSJC1000.9}& 297 (1)  & \bf 227 (21)& 229 (15)& 229 (17)& 229 (3)& 230 (41)& 222 & $\ge216$\\
          {\tt DSJC500.1}& 15 (11)  & 12 (50)& 12 (50)& 12 (50)& 12 (46)& 12 (50)& 12 & $\ge5$\\
          {\tt DSJC500.5}& 63 (3)  & 49 (7)& \bf 49 (13)& 49 (11)& 49 (10)& 49 (12)& 47 & $\ge43$\\
          {\tt DSJC500.9}& 161 (3)  & \bf 127 (14)& 127 (6)& 127 (11)& 127 (8)& 127 (9)& 126 & $\ge123$\\
          {\tt DSJR500.1c}& 87 (3)  & 85 (11)& 85 (7)& 85 (12)& 85 (7)& \bf 85 (17) & 85 & 84\\
          {\tt DSJR500.5}& 129 (9)  & \bf 126 (15)& 127 (3)& 127 (3)& 126 (13)& 126 (11)&122&122\\
          {\tt R1000.1c}& 103 (3)  & \bf 98 (3)& 99 (16)& 99 (2)& 99 (16)& 99 (22)&98&98\\
          {\tt R1000.5}& 250 (50)  & \bf 243 (4)& 250 (50)& 250 (50)& 244 (23)& 246 (5)&234&234\\
          {\tt R250.1c}& 65 (34)  & 64 (50)& 64 (48)& 64 (50)& 64 (50)& 64 (50)&64&64\\
          {\tt R250.5}& 66 (8)  & \bf 66 (26)& 66 (10)& 66 (9)& 66 (24)& 66 (12)&65&65\\
          {\tt flat1000\_50\_0}& 111 (1)  & \bf 50 (50)& 78 (7)& 78 (6)& 79 (3)& 78 (1)&50&50\\
          {\tt flat1000\_60\_0}& 112 (4)  & \bf 60 (50)& 84 (1)& 85 (26)& 85 (7)& 85 (4)&60&60\\
          {\tt flat1000\_76\_0}& 112 (1)  &  88 (20)& 88 (14)& 88 (18)& 88 (13)& \bf88 (27)&81&76\\
          {\tt flat300\_28\_0}& 40 (1)  & \bf 28 (37)& 28 (3)& 29 (6)& 28 (4)& 28 (1)&28&28\\
          {\tt le450\_15c}& 23 (12)  & 15 (50)& 15 (50)& 15 (50)& 15 (50)& 15 (50)&15&15\\
          {\tt le450\_15d}& 23 (2)  & 15 (50)& 15 (50)& 15 (50)& 15 (50)& 15 (50)&15&15\\
          {\tt le450\_25c}& 28 (4)  & 26 (4)& 26 (1)& \bf 26 (5)& 26 (3)& 26 (3)&25&25\\
          {\tt le450\_25d}& 28 (19)  & 26 (3)& 26 (4)& \bf 26 (7)& 26 (6)& 26 (2)&25&25\\
          \hline
          \multicolumn{2}{c|}{Averaged rank} & 1.55 & 3.05 & 2.50 & 3.05 & 2.65 & - & - \\
          \hline
    \end{tabular}
    }
\end{table}

In the column ``First'' (resp., ``\Re '' to ``\Rlf ''),
the integer without parentheses in the 
entry indicates
the smallest number of colors attained by the first
initial solution generator
(resp., by \Dyn-\Tabucol\ or \Foo-\Partcol\ with \Re\ to \Rlf) over 50 trials,
and the integer in parentheses
indicates how many trials the smallest number of colors
is achieved. 
%
%
In the rightmost two columns,
we show
the 
best-known upper bounds (``Best'')
and
the 
chromatic numbers or their lower bounds (``$\chi$'')
that are reported in recent papers~\cite{GHHP.2013,MT.2010,MG.2018}.
They are shown just for readers' information.
Recall that our interest here is not in improving the bounds
but in the performance difference derived from initial solution generators.
%

We compare the five initial solution generators in 
the smallest number of colors achieved and its frequency
over 50 trials. 
For each instance,
we regard that an initial solution generator $g$
is more effective than another generator $g'$ if
\begin{itemize}
\item $g$ achieves a strictly smaller number of colors than $g'$; or
\item $g$ and $g'$ achieve the same number of colors, but $g$ does it more frequently. 
\end{itemize}
In Tables \ref{tab:summary} and \ref{tab:summary_partcol},
we indicate the only best generator
(i.e., no tie with or lost to other generator) for each instance
by boldface if one exists.
We see that \Re, \Gr, \Rnd, \Dsa\ and \Rlf\ 
are the only best generators in 8, 2, 1, 3 and 1 instances in \tabref{summary},
and in 10, 1, 2, 0, 2 instances in \tabref{summary_partcol},
respectively.
We see that, for both local search algorithms,
\Re\ is the best initial solution generator in most instances. 
We also show 
the averaged rank of each initial solution generator
over the instances in the last row of the tables. 
We see that \Re\ is much better than the others
while the others are more competitive. 

The results tell that, with \Re, we are more likely to obtain
a legal $k$-coloring with a smaller $k$. 
We consider that it is thanks to better initial solutions
generated by the recycle method,
by which \Dyn-\Tabucol\ or \Foo-\Partcol\ solves
the \kCol\ problem more quickly.  




\invis{
{\color{red} (PENDING)
Let us emphasize the results on the {\tt R1000.5} instance. 
\Dyn-\Tabucol\ with the recycle method achieves $k=240$.
This number is outstanding for local search algorithms
since all 11 hybrid methods mentioned in \cite{GHHP.2013,MG.2018}
do not attain $k=240$ except MMT~\cite{MMT.2008}, QA-col~\cite{TC.2011} and DHQA~\cite{TC.2011-2}. 
The experimental environments are different between papers,
but ours is not special;
the time limit is just 600 seconds
and no parallel computation is used.
}
}

\paragraph{Timetabling instances.}
Next, we solve 13 timetabling instances
posed by Carter et al.~\cite{CLL.1996}.
The instances ask to assign time slots (colors) to exams (vertices)
so that no two exams are assigned the same time slot
if there is a student who takes both exams
(i.e., the two exams are joined by an edge). 
We are required to use as few time slots as possible. 
According to \cite{LTMG.2012},
the timetabling instances are different from random graphs
in degree coefficients of variation (CV).
The degree CV is defined as the ratio
of the standard deviation over the averaged vertex degree.  
In a random graph, it is shown that the degree CV does
not exceed $28\%$,
while it is from $36.3\%$ to $129.5\%$ in the timetabling instances.
This means that the vertex degree is more scattering
in the timetabling instances and that the timetabling instances
has different graph structures from DIMACS instances, most of which are
generated as random graphs. 

We show the results of \Dyn-\Tabucol\ in \tabref{timetable}.
We see
the 
superiority of the recycle method again.
We see that \Re, \Gr, \Rnd, \Dsa\ and \Rlf\
are the only best generators in 5, 0, 0, 2 and 1 instances, respectively. 
%
The averaged rank in the last row shows
that \Re\ is the best, followed by \Dsa\ and \Rlf,
which are comparatively better than \Gr\ and \Rnd. 

\begin{table}[t!]
  \centering
  \caption{Results of \Dyn-\Tabucol\ for the 13 timetabling instances} 
  \label{tab:timetable}
  {\small
  \begin{tabular}{rrr|rrrrr|rr}
    \hline
    Instance &
    \multicolumn{1}{c}{$n$} &
    \multicolumn{1}{c|}{First} &
    \multicolumn{1}{c}{\Re} &
    \multicolumn{1}{c}{\Gr} &
    \multicolumn{1}{c}{\Rnd} &
    \multicolumn{1}{c}{\Dsa} &
    \multicolumn{1}{c|}{\Rlf} &
    \multicolumn{1}{c}{\cite{LTMG.2012}} \\
    \hline
        {\tt hec-s-92}&81& 19 (50)  & \bf 17 (49)& 17 (37)& 17 (37)& 17 (48)& 17 (43) & 17\\
        {\tt sta-f-83}&139& 13 (50)  & 13 (50)& 13 (50)& 13 (50)& 13 (50)& 13 (50) & 13\\
            {\tt yor-f-83}&181& 20 (50)  & \bf 19 (38)& 19 (22)& 19 (22)& 19 (37)& 19 (27) & 19\\
            {\tt ute-s-92}&184& 10 (50)  & 10 (50)& 10 (50)& 10 (50)& 10 (50)& 10 (50) & 10\\
            {\tt ear-f-83}&190& 23 (50)  & \bf 22 (48)& 23 (50)& 23 (50)& 22 (47)& 22 (45) & 22\\
            {\tt tre-s-92}&261& 23 (50)  & 20 (28)& 20 (25)& 20 (26)& 20 (29)& \bf 20 (35) & 20\\
            {\tt lse-f-91}&381& 19 (50)  & 17 (50)& 17 (1)& 18 (18)& 17 (50)& 17 (9) & 17\\
            {\tt kfu-s-93}&461& 19 (50)  & 19 (50)& 19 (50)& 19 (50)& 19 (50)& 19 (50) & 19\\
            {\tt rye-s-93}&486& 22 (50)  & \bf 21 (37)& 21 (8)& 21 (8)& 21 (31)& 21 (11) & 21\\
            {\tt car-f-92}&543& 30 (50)  & 28 (43)& 30 (50)& 30 (50)& \bf 28 (49)& 28 (8) & 27\\
            {\tt uta-s-92}&622& 31 (50)  & 31 (50)& 31 (50)& 31 (50)& 31 (50)& 31 (50) & 29\\
            {\tt car-s-91}&682& 31 (50)  & \bf 28 (14)& 30 (4)& 30 (1)& 28 (9)& 28 (12) & 28\\
            {\tt pur-s-93}&2419& 35 (50)  & 33 (44)& 35 (50)& 35 (50)& \bf 33 (47)& 34 (18) & 33\\
            \hline
            \multicolumn{3}{c|}{Averaged rank} & 1.31 & 3.15 & 3.23 & 1.54 & 2.15 & -  \\
            \hline
  \end{tabular}
  }
\end{table}

In the rightmost column, we show the best results in \cite{LTMG.2012} for readers' information,
where \Tabucol, \Partcol, HEA~\cite{GH.1999}, ant-colony method~\cite{DT.2008},
hill-climbing method~\cite{L.2009} and backtracking \Dsatur\ algorithm~\cite{B.1979} are compared.

\paragraph{Discussion.}
We have seen that the recycle method is more effective than
other initial solution generators for many instances.

The merit of using the recycle method
is that we can start the search from a good initial solution
that has a small penalty value, where
the penalty value has a theoretical upper bound. 
The remarkable success of the recycle method in 
such DIMACS instances as 
{\tt flat1000\_50\_0}
and {\tt flat1000\_60\_0} must be due to the merit. 
%

The recycle method is not necessarily effective for all $k$.
The smaller $k$, the more difficult 
the \kCol\ problem would be. 
The search space must have a large number of valleys
of locally optimal solutions. 
Then the local search that starts from a good initial solution
(i.e., one generated by the recycle method)
may fall into such valleys.
We consider that, for such $k$,
the recycle method is not effective any more. 
In DIMACS instances {\tt DSJC500.1}, {\tt DSJC500.5} and {\tt DSJC500.9}
and timetabling instances {\tt sta-f-83} and {\tt tre-s-92}, 
the local search succeeds in improving $k$ to a sufficiently small value
(e.g., the best-known value or a value close to it)
within the time limit,
regardless of the initial solution generator. 
For such $k$, the recycle method does not outperform others significantly.

\paragraph{Detailed comparison of initial solution generators.}
Here we compare the five initial solution generators in the following aspects:
\begin{description}
\item[(I)] How small is a penalty value of the initial solution?
\item[(II)] How fast does \Dyn-\Tabucol\ decrease $k$?
\end{description}

\noindent
    {\bf (I)}
    We show that the recycle method generates a good initial solution in terms of the penalty value.
    
See \figref{k-penalty}.
We show how penalty values (average over 50 trials, vertical axis) of initial solutions change
with respect to the number $k$ of colors (horizontal axis).
The figure is taken from the experiment on
the 
{\tt DSJC1000.5} instance. 
The five curves correspond to the five initial solution generators.
    
\begin{figure}[t!]
  \centering
  \includegraphics[width=11cm,keepaspectratio]{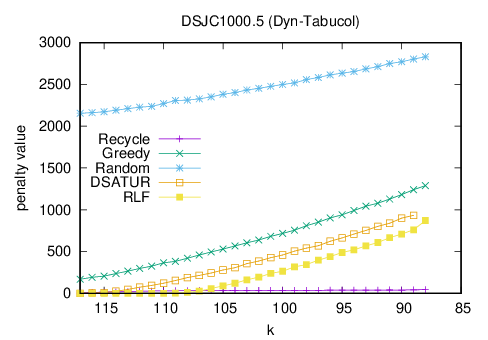}
  \caption{Penalty values of initial solutions for the {\tt DSJC1000.5} instance} 
  \label{fig:k-penalty}
\end{figure}

The figure shows the general tendency that is observed in many instances;
when $k$ gets smaller, 
the penalty values achieved by \Gr, \Rnd, \Dsa\ and \Rlf\ 
are increasing more and more,
whereas the penalty values achieved by \Re\ do not make a remarkable increase,
as depicted in \figref{k-penalty}.
%
In general, \Re\ yields the smallest penalty value,
followed by \Rlf, \Dsa, \Gr\ and \Rnd,
where \Dsa\ is better than \Rlf\ 
in some instances. 

This tendency is not observed for all instances.  
Let $k_0$ denote the number of colors in the first initial solution
that is produced by \Dsatur\ (i.e., line 1 in \algref{k-fixed}).
When $k$ is about $k_0-1$ to $k_0-5$,
\Rlf\ and \Dsa\ often yield a smaller penalty value than \Re.
This must be because the \kCol\ problem for such $k$ is so easy
that \Rlf\ and \Dsa\ can produce good initial solutions.
For instances such that the difference between $k_0$ and the achieved
number of colors is small (i.e., {\tt DSJC500.1}, {\tt R250.1c} and {\tt R250.5}),
we have not observed the tendency that is shown in \figref{k-penalty}. 

Recall the analysis on a penalty value in \secref{recycle}.
We see that, in \figref{k-penalty}, the penalty values achieved by \Rnd\ are close to the expectation $m/k$;
The instance has approximately $m=2.5\times10^5$ edges,
and when $k=10^2$, for example, the penalty value is close to $m/k=2.5\times10^3$.
On the other hand,
the expectation of a penalty value achieved by \Re\ is bounded by $\frac{n\Delta}{k(k+1)}$ (\propref{penalty_exp});
The instance has $n=10^3$ vertices and thus
the maximum degree $\Delta$ is at most $n=10^3$. 
For any $k$ that is shown in the figure, 
we see that the averaged penalty value of \Re\
is surely lower than the upper bound $\frac{n\Delta}{k(k+1)}$.

\noindent
    {\bf (II)}
Here we show that, with the recycle method, the tabu search algorithms decrease
the number $k$ of colors faster than other generators in many instances. 

In \figref{time-k}, we show how the number $k$ of colors 
(vertical axis)
is improved by \Dyn-\Tabucol\ along with computation time (average over 50 trials, horizontal axis).
We mentioned that, when \Dyn-\Tabucol\ is employed, the recycle method is the best generator
in 8 out of 20 DIMACS instances (\tabref{summary}). 
In these instances, we observe that \Dyn-\Tabucol\ with \Re\ 
decreases $k$ faster than other generators. 
We show in \figref{time-k} {\tt DSJC1000.9}, {\tt R1000.5} and {\tt flat1000\_50\_0} as examples. 
We see that improvement of \Re\  is made faster than other generators. 
On the other hand, we do not observe this tendency for instances such that the recycle method is less effective, as shown in the case
of the {\tt DSJC1000.5} instance in \figref{time-k}.

\begin{figure}[p!]
  \centering
  \begin{tabular}{cc}
    \includegraphics[width=8cm,keepaspectratio]{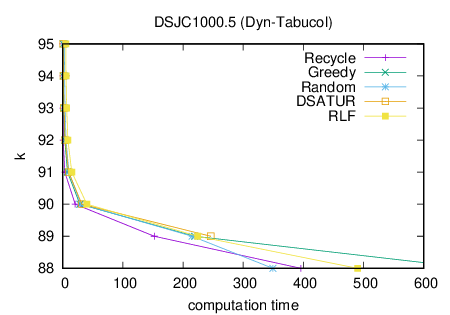}&
    \includegraphics[width=8cm,keepaspectratio]{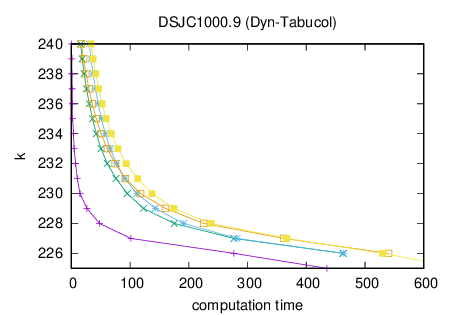}\\
    \includegraphics[width=8cm,keepaspectratio]{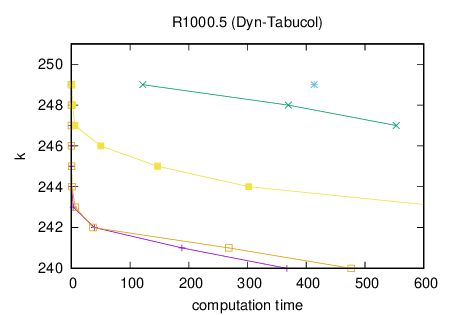}&
    \includegraphics[width=8cm,keepaspectratio]{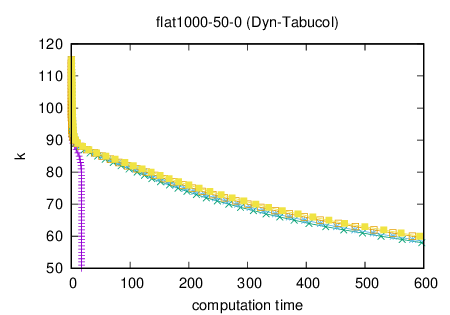}
  \end{tabular}
  \caption{How \Dyn-\Tabucol\ improves the number $k$ of colors along with computation time (s) for the {\tt DSJC1000.5}, {\tt DSJC1000.9}, {\tt R1000.5} and {\tt flat1000\_50\_0} instances}
  \label{fig:time-k}
\end{figure}

The smaller $k$,
the harder the \kCol\ problem is likely to be.
When $k$ is ``sufficiently small,''
where the extent of being ``sufficiently small'' is different
from instance to instance, 
it is not necessarily effective to start local search
from an initial solution that has a smaller penalty value
since such a solution must be among
deep valleys of locally optimal solutions.
To overcome this issue,
we should improve the heuristic algorithm itself
rather than the initial solution generator.
On the other hand, when $k$ is not so small,
there is a merit in using the recycle method as the initial solution generator
since it would decrease $k$ faster than other generators.

    \invis{
We show that the tabu search algorithms 
for the \kCol\ problem could be accelerated by the recycle method
especially when $k$ is not sufficiently small,
where the degree of ``sufficiently small''
varies from instance to instance. 

In \figref{time-k}, we show how
the number $k$ of colors 
(vertical axis)
is improved by \Dyn-\Tabucol\ along with computation time (average over 50 trials, horizontal axis).
The four instances shown in the figure
(i.e., {\tt DSJC1000.5}, {\tt DSJC1000.9}, {\tt R1000.5} and {\tt flat1000\_50\_0})
are typical examples.

\begin{figure}[p!]
  \centering
  \begin{tabular}{cc}
    \includegraphics[width=8cm,keepaspectratio]{DSJC1000_5-time-k.eps}&
    \includegraphics[width=8cm,keepaspectratio]{DSJC1000_9-time-k.eps}\\
    \includegraphics[width=8cm,keepaspectratio]{R1000_5-time-k.eps}&
    \includegraphics[width=8cm,keepaspectratio]{flat1000_50_0-time-k.eps}
  \end{tabular}
  \caption{How \Dyn-\Tabucol\ improves the number $k$ of colors along with computation time (s) for the {\tt DSJC1000.5}, {\tt DSJC1000.9}, {\tt R1000.5} and {\tt flat1000\_50\_0} instances}
  \label{fig:time-k}
\end{figure}

However, when $k$ is not sufficiently small,
the initial solution generator
can draw
a 
spectacular difference in
computation time 
of the heuristic algorithm. 
For {\tt DSJC1000.9},
\Dyn-\Tabucol\ with \Re\ achieves $k=226$
about 300 seconds faster than \Gr\ and \Rnd.
Furthermore, it is only \Re\ that reaches
to $k=225$ within the time limit of 600 seconds. 

We see
a 
more significant difference
for
the 
{\tt R1000.5} (resp., {\tt flat1000\_50\_0}) instance.  
\Tabucol\ with \Re\
achieves $k=240$ (resp., $k=50$)
in about 200 (resp., 20) seconds,
whereas \Tabucol\ with the other initial solution generators
are apparently slower.

For the 
{\tt DSJC1000.5} instance,
$\Re$ achieves any $k\ge89$
faster than any other initial solution generator.
The recycle method does not seem to be effective for smaller $k$ any more
since $\Re$ takes more time to achieve $k=88$ than $\Rnd$. 

The smaller $k$,
the harder the \kCol\ problem is likely to be.
When $k$ is sufficiently small, 
it is not necessarily effective to start local search
from an initial solution that has a smaller penalty value
since such a solution must be among
deep valleys of locally optimal solutions.
To overcome this issue,
we should improve the heuristic algorithm itself
rather than the initial solution generator.
On the other hand, when $k$ is not so small,
there is a merit in using the recycle method as the initial solution generator. 
}

\subsection{Recycle method on a hybrid method}
\label{sec:comp.head}

\paragraph{Overview of the algorithm.}
The hybrid method that we take up here is
\Head\ (Hybrid Evolutionary Algorithm in Duet)~\cite{MG.2018},
which is based on
\textsc{Hea} (Hybrid Evolutionary Algorithm)~\cite{GH.1999}.
\Head\ is regarded as a {\em memetic algorithm}~\cite{M.1989},
that is, a genetic algorithm
such that each individual is improved by local search. 
The characteristic of \Head\ is that 
it maintains only two solutions in the population.

We summarize \Head\ in \algref{head}.
It is a heuristic algorithm for the \kCol\ problem
that exploits the $k$-fixed-penalty strategy,
and thus can be used within the iterative scheme in \algref{k-fixed}. 
We explain roles of important steps as follows.

\begin{algorithm}[t!]
  \caption{\Head~\cite{MG.2018}}
  \label{alg:head}
  \DontPrintSemicolon
  \SetKwInOut{Input}{Input}\SetKwInOut{Output}{Output}
  \Input{a graph $G=(V,E)$, a natural number $k$, parameters $\tau_T$ and $\tau_C$}
  \Output{a $k$-coloring for $G$}
  \SetKwRepeat{Do}{do}{while}
  
  $p_1,p_2,e_1,e_2,b\gets$ random $k$-colorings;

  $t\gets0$;

  \Do{{\rm $\rhocomp(b)>0$ and $p_1\ne p_2$}}{

    $c_1\gets$GPX$(p_1,p_2)$;

    $c_2\gets$GPX$(p_2,p_1)$;

    $p_1\gets$\Tabucol$(c_1,\tau_T)$;

    $p_2\gets$\Tabucol$(c_2,\tau_T)$;

    $e_1\gets \arg\min_{c\in\{p_1,p_2,e_1\}}\rhocomp(c)$;

    $b\gets \arg\min_{c\in\{e_1,b\}}\rhocomp(c)$;

    \If{{\rm $t$ is a multiple of $\tau_C$}}{

      $p_1\gets e_2$;

      $e_2\gets e_1$;

      $e_1\gets$ a random $k$-coloring
    }
    $t\gets t+1$
  }
  
  Output $b$
\end{algorithm}

\begin{description}
\item[Line 1:] The algorithm maintains five solutions during the execution. 
  The $p_1$ and $p_2$ are solutions in the population,
  while $e_1$ and $e_2$ are elite solutions in recent ``cycles,''
  where a cycle represents successive $\tau_C$ generations. 
  The parameter $\tau_C$ determines the length of a cycle, and we set it to the default value (10).
  The best solution in the current cycle is stored as $e_1$,
  whereas $e_2$ stores the best solution in the last cycle.
  The $b$ is used to store the best solution among those searched. 
\item[Lines 4 and 5:] GPX (Greedy Partition Crossover)
  is a crossover operator~\cite{GH.1999}.
  Given two parent solutions $p_1$ and $p_2$,
  GPX$(p_1,p_2)$ constructs a child solution $c$ by
  choosing
  the 
  largest color class from $p_1$ and $p_2$ alternatively
  as a color class of $c$.
  To be more precise,
  the 
  largest subset among $V_{p_1,1},\dots,V_{p_1,k}$ is chosen as $V_{c,1}$,
  the 
  largest subset among $V_{p_2,1}\setminus V_{c,1},\dots,V_{p_2,k}\setminus V_{c,1}$
  is chosen as $V_{c,2}$, and so forth.
  After $k$ subsets are chosen, the 
  remaining vertices are colored by 1 to $k$ randomly.
  Note that GPX$(c_1,c_2)$ and GPX$(c_2,c_1)$ can deliver different solutions.
\item[Lines 6 and 7:] \Tabucol$(c,\tau_T)$ denotes an execution of \Tabucol\ such
  that $c$ is used as the initial solution,
  $\tau_T$ is the number of iterations (where we use the default value ($3\times10^4$)),
  and the tabu tenure is determined by an original mechanism.
  The obtained solutions are used as parents in the next generation.
\item[Lines 10 to 13:]
  The variable $t$ plays the role of the generation counter.
  When $t$ is a multiple of $\tau_C$, what we call a cycle finishes. 
  In this case, a parent $p_1$ in the next generation is replaced by $e_2$,
  the best solution in the last cycle (line 11),
  $e_2$ is replaced by $e_1$, the best solution in the current cycle (line 12),
  and $e_1$ is replaced by a random $k$-coloring (line 13).
  We may say that $e_1$ and $e_2$ are used to diversify the search.
\end{description}

\paragraph{Comparison of initial population generators.}

Here we show that it is promising to use the recycle method 
in generation of the initial population.
More specifically, when the recycle method is used,
\Head\ (\algref{head}) tends to solve the \kCol\ problem more quickly,
compared with the case when the default generator is used. 

In the original implementation of \Head,
two $k$-colorings $p_1$ and $p_2$ in the initial population
are generated by \Rnd.
We call this default generation scheme by \Rnd+\Rnd. 
We compare it with \Re+\Rnd, that is, $p_1$ is generated by the recycle method
whereas $p_2$ is still generated by \Rnd.

We downloaded the source code of \Head\footnote{\url{https://github.com/graphcoloring/HEAD}} that is written in C++.
We added implementation of \Re+\Rnd\ to the source code.
We solved the \kCol\ problem 50 times for each of the 20 DIMACS instances. 
We set the time limit to 300 seconds,
that is, we break the do-while-loop
(i.e., lines 3 to 15 in \algref{head})
when the computation time exceeds 300 seconds. 

\Head\ is an algorithm for the \kCol\ problem.
We observe how fast and how often it solves the problem
within the time limit, whereas we concentrated on 
how fast local search algorithms improve $k$ in the last experiment. 

We summarize results in \tabref{pop}.
A column ``Solved'' shows how many trials
\Head\ finds a legal $k$-coloring over the 50 trials,
and a column ``Avg+Stdev'' shows the averaged computation time
and the standard deviation, which are taken
over the trials where a legal $k$-coloring is found. 
The table shows the results for $k$ such that
at least one legal $k$-coloring is found over all trials.
This means that, for example,
a legal 20-coloring is not found for {\tt DSJC1000.1}
for any trial in the experiment.

\begin{table}[t!]
  \centering
  \caption{Results of \Head\ for the selected 20 DIMACS instances}
  \label{tab:pop}
  {\small
    \begin{tabular}{lrcrrcrrc}
      \hline
      Instance &
      \multicolumn{1}{c}{$k$} &&
      \multicolumn{2}{c}{\Re$+$\Rnd} &&
      \multicolumn{2}{c}{\Rnd$+$\Rnd} &\\
      &&& Solved & Avg$\pm$Stdev && Solved & Avg$\pm$Stdev& \\
      \hline
          {\tt DSJC1000.1}
          & 22 && 50 & \bf0.033$\pm$0.006 && 50 & 0.036$\pm$0.006\\ 
          & 21 && 50 & 0.795$\pm$0.454 && 50 & 0.685$\pm$0.294\\ 
          \hline
          {\tt DSJC1000.5}
          & 85 && 50 & \bf39.10$\pm$19.45 && 50 & 46.32$\pm$14.30\\ 
          & 84 && 50 & \bf89.98$\pm$33.64 && 50 & 105.48$\pm$42.71\\ 
          & 83 && 22 & 216.68$\pm$67.42 && 16 & 207.41$\pm$47.31\\ 
          \hline
          {\tt DSJC1000.9}
          & 225 && 50 & \bf32.14$\pm$20.36 && 50 & 43.85$\pm$26.88\\ 
          & 224 && 46 & 85.81$\pm$64.98 && 45 & 99.49$\pm$51.88\\ 
          & 223 && 3 & 163.91$\pm$54.34 && 7 & 123.96$\pm$52.51\\ 
          \hline
          {\tt DSJC500.1}
          & 13 && 50 & 0.022$\pm$0.003 && 50 & 0.023$\pm$0.004\\ 
          & 12 && 50 & 36.33$\pm$26.41 && 50 & 49.12$\pm$39.17\\ 
          \hline
          {\tt DSJC500.5}
          & 50 && 50 & \bf1.05$\pm$0.519 && 50 & 1.40$\pm$0.715\\ 
          & 49 && 50 & \bf4.24$\pm$2.15 && 50 & 5.33$\pm$2.58\\ 
          \hline
          {\tt DSJC500.9}
          & 127 && 50 & 6.99$\pm$4.92 && 50 & 8.24$\pm$5.07\\ 
          & 126 && 22 & 77.46$\pm$69.62 && 19 & 84.67$\pm$70.08\\ 
          \hline
          {\tt DSJR500.1c}
          & 87 && 8 & 0.139$\pm$0.166 && 2 & 0.242$\pm$0.05\\ 
          \hline
          {\tt DSJR500.5}
          & 126 && 50 & 37.13$\pm$36.18 && 49 & 47.37$\pm$43.81\\ 
          & 125 && 40 & 118.88$\pm$86.99 && 42 & 144.36$\pm$79.31\\ 
          & 124 && 18 & 208.95$\pm$77.88 && 6 & 189.30$\pm$88.95\\ 
          \hline
          {\tt R1000.1c}
          & 99 && 15 & 3.01$\pm$7.95 && 2 & 2.37$\pm$2.23\\ 
          & 98 && \bf12 & 0.813$\pm$1.29 && 0 & N/A\\ 
           \hline
          {\tt R1000.5}
          & 249 && 37 & 135.60$\pm$59.95 && 22 & 153.25$\pm$61.02\\ 
          & 248 && 19 & 195.21$\pm$66.52 && 8 & 209.24$\pm$62.59\\ 
          & 247 && 5 & 162.87$\pm$27.94 && 1 & 268.58$\pm$0.00\\ 
          \hline
          {\tt R250.1c}
          & 65 && 7 & \bf0.047$\pm$0.025 && 3 & 0.092$\pm$0.022\\ 
          & 64 && \bf5 & 0.043$\pm$0.048 && 0 & N/A\\ 
          \hline
          {\tt R250.5}
          & 68 && 50 & \bf3.12$\pm$11.78 && 49 & 18.42$\pm$46.55\\ 
          & 67 && 43 & 26.85$\pm$45.28 && 42 & 47.83$\pm$67.37\\ 
          & 66 && 21 & 151.62$\pm$120.32 && 16 & 142.05$\pm$112.08\\ 
          \hline
          \multicolumn{9}{r}{(To be continued)}
    \end{tabular}
  }
\end{table}

\begin{table}[t!]
  \centering
  {\small
    \begin{tabular}{lrcrrcrrc}
      \multicolumn{9}{l}{(Continuation of \tabref{pop})}\\
      \hline
      Instance &
      \multicolumn{1}{c}{$k$} &&
      \multicolumn{2}{c}{\Re$+$\Rnd} &&
      \multicolumn{2}{c}{\Rnd$+$\Rnd} &\\
      &&& Solved & Avg (s) && Solved & Avg (s)& \\
      \hline
          {\tt flat1000\_50\_0}
          & 52 && 50 & \bf0.099$\pm$0.029 && 50 & 15.63$\pm$4.50\\ 
          & 51 && 50 & \bf0.113$\pm$0.038 && 49 & 19.02$\pm$6.33\\ 
          & 50 && 50 & \bf0.145$\pm$0.049 && 46 & 21.48$\pm$6.41\\ 
          \hline
          {\tt flat1000\_60\_0}
          & 62 && 50 & \bf0.243$\pm$0.075 && 50 & 28.80$\pm$7.05\\ 
          & 61 && 50 & \bf0.241$\pm$0.069 && 48 & 32.97$\pm$10.64\\ 
          & 60 && 50 & \bf0.324$\pm$0.120 && 49 & 35.40$\pm$10.22\\ 
          \hline
          {\tt flat1000\_76\_0}
          & 83 && 50 & 109.90$\pm$53.46 && 49 & 94.46$\pm$24.87\\ 
          & 82 && 20 & 209.85$\pm$37.09 && 15 & 236.64$\pm$60.47\\ 
          \hline
              {\tt flat300\_28\_0}
              & 30 && 3 & 164.94$\pm$118.42 && 1 & 315.38$\pm$0.0\\ 
              & 29 && \bf16 & 0.061$\pm$0.073 && 0 & N/A\\ 
              \hline
          {\tt le450\_15c}
          & 16 && 42 & \bf0.064$\pm$0.066 && 47 & 0.116$\pm$0.093\\ 
          & 15 && \bf1 & 0.061$\pm$0.0 && 0 & N/A\\ 
          \hline
          {\tt le450\_15d}
          & 17 && 50 & \bf0.021$\pm$0.005 && 50 & 0.060$\pm$0.014\\ 
          & 16 && 35 & 0.104$\pm$0.123 && 43 & 0.104$\pm$0.079\\ 
          \hline
          {\tt le450\_25c}
          & 27 && 50 & 0.025$\pm$0.004 && 50 & 0.026$\pm$0.004\\ 
          & 26 && 50 & 0.345$\pm$0.250 && 50 & 0.376$\pm$0.254\\ 
          \hline
          {\tt le450\_25d}
          & 27 && 50 & 0.024$\pm$0.003 && 50 & 0.025$\pm$0.003\\ 
          & 26 && 50 & 0.328$\pm$0.235 && 50 & 0.302$\pm$0.252\\ 
          \hline
    \end{tabular}
  }
\end{table}

We claim that (\Head\ with) \Re+\Rnd\ should be at least as good as \Rnd+\Rnd,
or even better than \Rnd+\Rnd\ in some cases.
Comparing the columns ``Solved'', we see that
\Re+\Rnd\ solves the \kCol\ problem
in more trials (resp., in the same number of trials
and in less trials) than \Rnd+\Rnd\ in
25 (resp., 17 and 4) out of the 46 rows in the table. 
Observe bold numbers.
They represent cases such that
\Re+\Rnd\ should be especially better than \Rnd+\Rnd\ in the following sense;
\begin{itemize}
\item The number of trials that  \Re+\Rnd\ solves the problem
  is emphasized in four rows, that is, 
12 for {\tt R1000.1c} $(k=98)$,
5 for {\tt R250.1c} $(k=64)$,
16 for {\tt flat300\_28\_0} $(k=29)$ and
1 for {\tt le450\_15c} $(k=15)$. 
In these four cases, \Rnd+\Rnd\ does not solve the problem in any trial.
\item The computation time of \Re+\Rnd\ 
is emphasized in 16 out of the 46 rows. 
In these cases, it is judged by a statistical test
that the averaged computation time is different from
(i.e., smaller than) that of \Rnd+\Rnd.
For a statistical test,
we conduct Student's $t$-test if the variance is statistically
different between two samples (decided by $F$-test)
and Welch's $t$-test otherwise, 
where the significance level is set to 0.05 in all tests.
\end{itemize}
On the other hand,
no number in the columns for \Rnd+\Rnd\ is emphasized by boldface
since \Rnd+\Rnd\ is not better than \Re+\Rnd\ in the above sense.

There are many strategies possible to generate an initial population.
We may use other constructive algorithms such as
\Gr, \Dsa\ and \Rlf\ that are mentioned in \secref{comp.setup}. 
Some solutions in the population can be generated so that
they are ``distant'' from other solutions for diversification of the search.

Still, 
we regard \Re+\Rnd\ as a reasonable
initial population generator for \Head. 
We have observed that \Re\ produces
a good initial solution in the sense of the penalty value,
compared with other constructive algorithms.
The initial population generator \Re+\Rnd\ constructs $p_1$ by \Re\ 
and $p_2$ by \Rnd, where $p_2$
is expected to be ``distant'' from $p_1$ to some extent. 
Our experiments illustrates that, for hybrid methods,
there should be merit in
including a solution that is generated by the recycle method
into an initial population.

\section{Concluding Remark}
\label{sec:conc}

In the present paper, we highlighted 
the recycle method,
an initial solution generator of a general heuristic algorithm
that employs the $k$-fixed search strategies.
We provided our motivation that includes analyses of upper bounds on the penalty value. 
Experimental results showed that
the recycle method can make 
\Tabucol\ and \Partcol, representative tabu search algorithms, 
achieve
a fewer number of colors 
than the conventional methods, that is, \Gr, \Rnd, \Dsa, and \Rlf. 
Moreover, the recycle method
accelerates improvement of
the number of colors 
in early iterations of the iterative scheme, in comparison with
these methods.
We also demonstrated that the hybrid method \Head\ with \Re+\Rnd\  
tends to find a legal $k$-coloring more quickly than \Head\ with the default generator
\Rnd+\Rnd.

The recycle method is universal in the sense that it can be applied for
any $k$-fixed strategic heuristic algorithm somehow.
Our experiments deal with three algorithms in the literature,
\Tabucol, \Partcol\ and \Head, which are just illustrative examples. 
The recycle method is so simple that it has already been used in
previous studies (e.g., \cite{CDS.2018,HK.2019}). 
However, the present paper is the first one
that investigates its effectiveness from theoretical as well as experimental viewpoints.  

The recycle method is efficient.
Let $c$ be a legal $(k+1)$-coloring.
For the configuration of $\Re$,
it takes $O(n)$ time to determine the smallest color class
and to recolor the vertices in the class,
whereas the running times of \Dsatur\ and \Rlf\
are bounded by $O(n^2)$ and $O(n^3)$, respectively~\cite{L.2016}. 
It is also worthwhile to mention that
the recycle method is easy to implement.

We hope that the recycle method becomes
a standard initial solution generator
for the \kCol\ problem.

\section*{Acknowledgments}
We gratefully acknowledge very careful and
detailed comments given by anonymous reviewers.

%

\bibliographystyle{jorsj} 
\bibliography{reference}

\end{document}